\documentclass[aps,pra,preprint,superscriptaddress,longbibliography,nofootinbib]{revtex4-1}
\usepackage[utf8]{inputenc}
\usepackage{amsfonts}
\usepackage{amsmath}
\usepackage{amssymb}
\usepackage{amsthm}
\usepackage{mathtools}
\usepackage{enumerate}
\usepackage[shortlabels]{enumitem}
\usepackage{array}
\usepackage{bm}
\usepackage{graphicx}
\usepackage{nicefrac}
\usepackage{booktabs}
\usepackage[all]{xy}
\usepackage{hyperref}
\usepackage{bbm}
\usepackage{tikz}
\usetikzlibrary{arrows}
\usetikzlibrary{positioning}
\usepackage{cancel}
\usepackage{color}
\usepackage{MnSymbol}
\usepackage{multirow}

\newcommand\R{\mathbb{R}}

\newcommand\mc[1]{\mathcal{#1}}
\newcommand\cH{\mc{H}}
\newcommand\cK{\mc{K}}
\newcommand\cJ{\mc{J}}
\newcommand\LH{{\cL(\cH)}}
\newcommand\LHsa{\mc{L}(\cH)_\mathrm{sa}}
\newcommand\LHS{{\LHop}}
\newcommand\LAS{{\LAop}}
\newcommand\LBS{{\LBop}}
\newcommand\LHop{{\LH^\mathrm{op}}}
\newcommand\LK{\mc{L}(\cK)}

\newcommand\cA{\mc{A}}

\newcommand\cL{\mc{L}}

\newcommand\LA{{\mc{L}(\cH_A)}}
\newcommand\LAsa{{\mc{L}(\cH_A)_\mathrm{sa}}}
\newcommand\LAop{{\LA^\mathrm{op}}}
\newcommand\LB{{\mc{L}(\cH_B)}}
\newcommand\LBop{{\LB^\mathrm{op}}}
\newcommand\LAxB{\mc{L}(\cH_A \otimes \cH_B)}
\newcommand\LAmB{\mc{L}(\cH_A, \cH_B)}
\newcommand\CJ{{CJ}}
\newcommand\JC{{JC}}
\newcommand\JA{{\mc{J}(\cH_A)}}
\newcommand\JAsa{{\mc{J}(\cH_A)_\mathrm{sa}}}
\newcommand\JB{{\mc{J}(\cH_B)}}
\newcommand\JBsa{{\mc{J}(\cH_B)_\mathrm{sa}}}

\newcommand\JH{{\mc{J}(\cH)}}
\newcommand\JHsa{\mc{J}(\cH)_\mathrm{sa}}
\newcommand\JK{\mc{J}(\cK)}

\newcommand\tr{\mathrm{tr}}
\newcommand\trA{\tr_{\cH_A}}
\newcommand\trB{\tr_{\cH_B}}
\newcommand\ra{\rightarrow}

\newtheorem{theorem}{Theorem}
\newtheorem{lemma}{Lemma}

\newcommand{\mf}[1]{#1}
\newcommand{\cmf}[1]{#1}

\begin{document}

\title{Variations on the Choi-Jamio\l kowski isomorphism}
\author{Markus Frembs}
\email{m.frembs@griffith.edu.au}
\author{Eric G. Cavalcanti}
\email{e.cavalcanti@griffith.edu.au}
\affiliation{Centre for Quantum Dynamics, Griffith University,\\ Yugambeh Country, Gold Coast, QLD 4222, Australia}

\begin{abstract}
    We address various aspects of a widely used tool in quantum information theory: the Choi-Jamio\l kowski isomorphism [A. Jamio\l kowski, Rep. Math. Phys., 3, 275 (1972)]. We review different versions of the isomorphism, their properties and propose a unified description that combines them all. To this end, we identify the physical reason for the appearance of the (basis-dependent) operation of transposition in the isomorphism as used in Choi's theorem [M.-D. Choi, Lin. Alg. Appl., 10, 285 (1975)]. This requires a careful distinction between Jordan algebras and the different $C^*$-algebras they arise from, which are distinguished by their order of composition.
    
    Physically, the latter encodes a choice of time orientation in the respective algebras, which relates to a number of recent results, including a characterisation of quantum from more general non-signalling bipartite correlations [M. Frembs and A. D\"oring, arXiv:2204.11471] and a classification of bipartite entanglement [M. Frembs, arXiv:2207.00024]. %We sketch the relation to these as motivating examples for the broader significance of the techniques presented here.
\end{abstract}

\maketitle

\section{Choi and Jamio\l kowski - two isomorphisms, two theorems}\label{sec: two isos two thms}
%The Choi-Jamio\l kowski isomorphism and its properties

Let $\mc{L}(\cH_A) = M_n(\mathbb{C})$ and $\mc{L}(\cH_B) = M_m(\mathbb{C})$ be two finite-dimensional matrix algebras over the complex numbers. The \emph{Choi-Jamio\l kowski isomorphism} \cite{Jamiolkowski1972,Choi1975} refers to an identification of linear operators in the tensor product algebra $\LA \otimes \LB \cong \cL(\cH_A \otimes \cH_B)$ and linear maps $\cL(\cH_A,\cH_B) := \{\phi: \LA \ra \LB \mid \phi \text{\ linear}\}$ between these algebras. This identification is not unique, a number of different versions exist, each with different properties. This section contains a brief review of the most common versions of the isomorphism.

The main part of this paper is Sec.~\ref{sec: II}, where we highlight the physical significance of the (relative) operator ordering between algebras $\LA$ and $\LB$ in the Choi-Jamio\l kowski isomorphism (Sec.~\ref{sec: Jordan vs C*-algebras}), give a reformulation of Choi's theorem that makes explicit the dependence on the choice of (relative) operator ordering (Thm.~\ref{thm: Choi revisited} in Sec.~\ref{sec: reformulations}), and based on that propose a unified version of the Choi-Jamio\l kowski isomorphism (Sec.~\ref{sec: variations on the CJI}).

\subsection{Jamio\l kowski's isomorphism and theorem}

Ref.~\cite{Jamiolkowski1972} defines %an isomorphism in terms of the correspondence: for all $a \in \mc{L}(\cH_A)$, $b \in \mc{L}(\cH_B)$,
%\begin{equation}\label{eq: CJI}
    %\tr[\rho^J_\phi (a \otimes b)] = \tr[\phi(a) b]\; .
%\end{equation}
%Here and below, we write $\tr_A[\cdot]$ for $\trA[\cdot]$.
%This yields
a linear isomorphism $J: \mc{L}(\cH_A,\cH_B) \ra \mc{L}(\cH_A \otimes \cH_B)$ %(with inverse $J^{-1}$)
by
%Eq.~(1.5) in \cite{Jamiolkowski1972}
\begin{align}\label{eq: JI}
    \begin{split}
        \rho^J_\phi = J(\phi) &:= \sum_k e_k^* \otimes \phi(e_k)\; ,\\
        J^{-1}(\rho)(a) %&:= \phi^{J}_\rho(a)
        &:= \trA[\rho(a\otimes \mathbbm{1}_B)]\; ,
    \end{split}
\end{align}
for all $a \in \LA$, where $\{e_k\}_k \in \mathrm{ONB}(\LA)$ is any orthonormal basis in $\LA$ with respect to the Hilbert-Schmidt inner product (see Eq.~(\ref{eq: HS inner product}) below).\footnote{Following notation in Ref.~\cite{Jamiolkowski1972} and related work \cite{FrembsDoering2019b,FrembsDoering2022a,FrembsDoering2022b,Frembs2022a}, we use $*$ instead of $\dagger$ for the (Hermitian) adjoint.} In particular, $J$ is \emph{basis-independent}: let $\{f_l\}_l \in \mathrm{ONB}(\LA)$ be another orthonormal basis in $\LA$, then\footnote{\mf{Here, $\sum_l f_lf^*_l$ and $\sum_k e_ke^*_k$ are resolutions of the identity in the inner product space $\LA$; in particular, note that Eq.~(\ref{eq: basis-independence JI}) involves no matrix multiplication.}}
\begin{align}\label{eq: basis-independence JI}
    \begin{split}
        \rho^J_\phi
        = \sum_k e_k^* \otimes \phi(e_k)
        &= \sum_k \big((\sum_l f_lf_l^*)e_k\big)^* \otimes \phi(e_k)\\
        &= \sum_k \sum_l \overline{(e_k,f_l)}_\LA f_l^* \otimes \phi(e_k)\\
        %= \sum_k \sum_l \overline{\trA[f^*_le_k]} f_l^* \otimes \phi(e_k)\\
        %&= \sum_l f_l^* \otimes \sum_k \tr_{\cH^*_A}[e^*_kf_l] \phi(e_k)
        &= \sum_k \sum_l f_l^* \otimes (f_l,e_k)_\LA \phi(e_k)\\
        &= \sum_l f_l^* \otimes \phi\big((\sum_k e_ke_k^*)f_l\big)
        = \sum_l f_l^* \otimes \phi(f_l)\; .
    \end{split}
\end{align}
Moreover, the following theorems hold for $J$.
%NB: As we will see below, they also hold for $C$ in Eq.~(\ref{eq: CI}).

\begin{theorem}[de Pillis \cite{dePillis1967}]\label{thm: dePillis}
    Let $\phi \in \LAmB$. Then $\phi$ is Hermiticity-preserving, that is, $^*\phi(a) :=(*\circ \phi)(a) %= (\phi(a))^* = \phi(a^*) =
    = (\phi \circ *)(a) =: \phi^*(a)$ for all $a \in \LA$, if and only if $J(\phi)$ is Hermitian.
\end{theorem}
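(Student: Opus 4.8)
The plan is to exploit that $J$ is a linear \emph{isomorphism}, so that Hermiticity of $J(\phi)$ can be tested by comparing $J(\phi)$ with its adjoint $J(\phi)^*$ and then pulling the resulting condition back through $J^{-1}$. First I would compute $J(\phi)^*$ explicitly. Using $(x \otimes y)^* = x^* \otimes y^*$ on $\LA \otimes \LB$ together with the defining formula~(\ref{eq: JI}),
\begin{align*}
    J(\phi)^* = \Big(\sum_k e_k^* \otimes \phi(e_k)\Big)^* = \sum_k e_k \otimes \phi(e_k)^*\; .
\end{align*}

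The key step is to recognise the right-hand side as $J$ applied to a transformed map. The adjoint $a \mapsto a^*$ is an \emph{anti-unitary} of $\LA$ equipped with the Hilbert-Schmidt inner product, since $(a^*, b^*)_\LA = \overline{(a,b)_\LA}$; hence $\{f_k\}_k := \{e_k^*\}_k$ is again an orthonormal basis of $\LA$. Writing $e_k = f_k^*$ and $\phi(e_k)^* = \phi(f_k^*)^* =: \tilde\phi(f_k)$, with $\tilde\phi := *\circ\phi\circ *$, the expression above becomes $J(\phi)^* = \sum_k f_k^* \otimes \tilde\phi(f_k)$. By the basis-independence of $J$ established in Eq.~(\ref{eq: basis-independence JI}), this is exactly $J(\tilde\phi)$, so that $J(\phi)^* = J(\tilde\phi)$. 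Note that $\tilde\phi$ is genuinely linear, being the composite of the linear map $\phi$ with two anti-linear adjoints.

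The conclusion then follows from injectivity of $J$. Indeed, $J(\phi)$ is Hermitian precisely when $J(\phi) = J(\phi)^* = J(\tilde\phi)$, which by injectivity holds if and only if $\phi = \tilde\phi$, i.e. iff $\phi(a) = \phi(a^*)^*$ for all $a \in \LA$. Taking adjoints on both sides, this is equivalent to $\phi(a)^* = \phi(a^*)$ for all $a$, which is exactly the Hermiticity-preserving condition $^*\phi = \phi^*$ of the statement.

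I expect the only real subtlety to be the bookkeeping around the anti-unitarity of the adjoint: one must check that $\{e_k^*\}_k$ is again orthonormal (and not merely a basis) and that $\tilde\phi$ is linear, so that basis-independence of $J$ may legitimately be invoked. An alternative, entirely concrete route sidesteps these remarks by expanding in the matrix units $\{E_{ij}\}$, which are Hilbert-Schmidt orthonormal in $\LA$: comparing coefficients of the linearly independent $\{E_{ji}\}$ on both sides of $J(\phi)^* = J(\phi)$ yields $\phi(E_{ij})^* = \phi(E_{ij}^*)$ for all $i,j$, which extends to all of $\LA$ by (anti-)linearity.
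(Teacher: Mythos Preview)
The paper does not supply its own proof of Thm.~\ref{thm: dePillis}; it merely states the result and cites de Pillis. There is therefore nothing to compare your argument against from within the paper.

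That said, your proof is correct and efficiently organised. The only nontrivial ingredient is the identity $J(\phi)^* = J(\tilde\phi)$ with $\tilde\phi = *\circ\phi\circ*$, and you establish it cleanly by using anti-unitarity of the adjoint on $(\LA,(\cdot,\cdot)_\LA)$ to show that $\{e_k^*\}_k$ is again orthonormal and then invoking the basis-independence of $J$ from Eq.~(\ref{eq: basis-independence JI}). The checks you flag---orthonormality of $\{e_k^*\}_k$ and linearity of $\tilde\phi$---are exactly the points that need care, and both are sound: $(a^*,b^*)_\LA = \tr[b a^*] = \tr[a^*b] = \overline{(a,b)}_\LA$ by cyclicity, and the composite of two antilinear maps with a linear one is linear. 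Your fallback via matrix units $\{E_{ij}\}$ is also a perfectly valid (and arguably more elementary) route.
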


\begin{theorem}[Jamio\l kowski \cite{Jamiolkowski1972}]\label{thm: Jamiolkowski}
    Let $\phi \in \LAmB$.
    \begin{enumerate}
        \item[(a)] $\phi$ is positive if and only if $J(\phi)$ is \emph{positive on pure tensors (POPT)} \emph{(}or \emph{block positive, cf. \cite{PaulsenShultz2013,Kye2022})}, that is, $\tr[J(\phi)(a \otimes b)] \geq 0$ for all $a \in \LA_+$, $b \in \LB_+$. %Equivalently, $J$ restricts to a map $\mf{J|_+}: \LAmB_+ \ra \LAxB_\mathrm{POPT}$.
        \item[(b)] Let further $^*\phi = \phi^*$. %$\phi(a^*) = \phi^*(a)$ for all $a \in \LA$.
        Then $\phi$ is trace-preserving if and only if $\trB[J(\phi)] = \mathbbm{1}_A$.
        %NB: original version
        %for all $\{v_i\} \in \mathrm{ONB}(\cH_A)$, $\{w_k\} \in \mathrm{ONB}(\cH_B)$
        %\begin{equation*}
            %\sum_{k} (J(\phi) v_i\otimes w_k,v_j\otimes w_k) = \delta_{ij}
        %\end{equation*}
    \end{enumerate}
\end{theorem}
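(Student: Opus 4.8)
The plan is to reduce both parts to a single bilinear identity relating the action of $\phi$ to trace pairings against $J(\phi)$. For arbitrary $a \in \LA$ and $b \in \LB$ I would first compute
\begin{align*}
\tr[J(\phi)(a \otimes b)]
&= \sum_k \tr\big[(e_k^* a) \otimes (\phi(e_k) b)\big]
= \sum_k \tr(e_k^* a)\, \tr(\phi(e_k) b)\\
&= \tr\Big[\phi\Big(\sum_k (e_k, a)_\LA\, e_k\Big)\, b\Big]
= \tr[\phi(a)\, b]\; ,
\end{align*}
using $\tr(X \otimes Y) = \tr(X)\tr(Y)$, linearity of $\phi$, and the resolution of the identity $a = \sum_k (e_k, a)_\LA\, e_k$ with $(e_k, a)_\LA = \tr(e_k^* a)$. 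Keeping track of the adjoints and conjugations in the Hilbert--Schmidt inner product is the only bookkeeping required; once the identity $\tr[J(\phi)(a\otimes b)] = \tr[\phi(a)\, b]$ is in hand, both statements are essentially duality.

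For part (a), the identity recasts the POPT condition as $\tr[\phi(a)\, b] \geq 0$ for all $a \in \LA_+$, $b \in \LB_+$. If $\phi$ is positive then $\phi(a) \geq 0$, and the trace of a product of two positive operators is nonnegative, giving one implication. For the converse I would restrict $b$ to rank-one projections $|\psi\rangle\langle\psi|$, obtaining $\langle\psi|\phi(a)|\psi\rangle \geq 0$ for all $\psi$; a matrix with nonnegative numerical range is automatically self-adjoint and positive semidefinite, so $\phi(a) \geq 0$ for every $a \in \LA_+$, i.e. $\phi$ is positive. This is precisely the standard duality $c \geq 0 \Leftrightarrow \tr(cb) \geq 0$ for all $b \geq 0$.

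For part (b), I would specialise to $b = \mathbbm{1}_B$. The identity then reads $\tr(\phi(a)) = \tr[J(\phi)(a \otimes \mathbbm{1}_B)] = \trA\big[\trB[J(\phi)]\, a\big]$, whereas $\tr(a) = \trA[\mathbbm{1}_A\, a]$. Hence $\phi$ is trace-preserving exactly when $\trA\big[(\trB[J(\phi)] - \mathbbm{1}_A)\, a\big] = 0$ for all $a$. This is where the hypothesis $^*\phi = \phi^*$ is used: by Thm.~\ref{thm: dePillis} it renders $J(\phi)$ Hermitian and hence $\trB[J(\phi)]$ self-adjoint, so that the comparison with the self-adjoint $\mathbbm{1}_A$ is between self-adjoint operators and it suffices to test the pairing on self-adjoint $a$ (which span $\LA$); non-degeneracy of the real trace form on self-adjoint operators then upgrades the vanishing pairing to the operator identity $\trB[J(\phi)] = \mathbbm{1}_A$.

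The computations throughout are routine; the one point that demands genuine care is exactly this last step, namely using the Hermiticity-preservation hypothesis to guarantee that $\trB[J(\phi)]$ is self-adjoint, so that non-degeneracy of the trace pairing yields the full operator equation $\trB[J(\phi)] = \mathbbm{1}_A$ rather than merely a family of scalar constraints.
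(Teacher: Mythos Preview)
The paper does not supply its own proof of this theorem; it is stated with a citation to Jamio\l kowski. There is therefore nothing to compare against directly. That said, the bilinear identity $\tr[J(\phi)(a\otimes b)] = \tr[\phi(a)\,b]$ that you derive is precisely Eq.~(\ref{eq: CJI}) in the paper (up to relabelling $b \leftrightarrow b^*$), so your reduction is the one the paper itself records as encoding the isomorphism.

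Your argument for part (a) is correct and standard.

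Your argument for part (b) reaches the right conclusion, but your account of where the hypothesis $^*\phi = \phi^*$ enters is off. The trace pairing $(X,a) \mapsto \tr(Xa)$ is already non-degenerate on all of $\LA$, not only on the self-adjoint part: if $\tr(Xa) = 0$ for every $a \in \LA$, take $a = X^*$ to get $\tr(XX^*) = 0$ and hence $X = 0$. Consequently, from $\tr\big[(\trB[J(\phi)] - \mathbbm{1}_A)\,a\big] = 0$ for all $a \in \LA$ you obtain $\trB[J(\phi)] = \mathbbm{1}_A$ immediately, with no appeal to self-adjointness of $\trB[J(\phi)]$ and hence no need for de~Pillis's theorem. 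The Hermiticity-preservation hypothesis in the statement is contextual (it is the physically relevant class, and it is the standing assumption in the Jordan-algebra reformulation Thm.~\ref{thm: Jamiolkowski for Jordan algebras}) rather than logically required for the equivalence in (b). Your proof is valid as written; the only inaccuracy is the claim that this is ``where the hypothesis $^*\phi = \phi^*$ is used.''
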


\subsection{Choi's isomorphism and theorem}

The most commonly used version of the Choi-Jamio\l kowski isomorphism is motivated by a theorem due to Choi \cite{Choi1975} (Thm.~\ref{thm: Choi}), who defines a map $C: \LAmB \ra \LAxB$ by
\begin{align}\label{eq: CI}
    \begin{split}
        \rho^C_\phi = C(\phi) &:= \sum_{ij} |i\rangle\langle j| \otimes \phi(|i\rangle\langle j|)\; ,\\
        C^{-1}(\rho)(a) %&=\phi^{C}_\rho(a)
        &:= \trA[\rho(a^T\otimes \mathbbm{1}_B)]\; ,
    \end{split}
\end{align}
for all $a \in \LA$, where $\{|i\rangle\langle j|\}_{ij}$ denotes the orthonormal basis in $\LA$ built from an orthonormal basis $\{|i\rangle\}_i \in \mathrm{ONB}(\cH_A)$, and $T$ denotes transposition in that basis.\footnote{This is sometimes written $\rho^C_\phi := (\mathrm{id} \otimes \phi)(|\Phi\rangle\langle\Phi|)$, where $|\Phi\rangle = \sum_i |i\rangle \otimes |i\rangle$ is an (unnormalised) maximally entangled state. \cmf{Note that if $\phi$ is trace-preserving, then $\tr[\rho^C_\phi] = \tr[\rho^J_\phi] = d_A$ (cf. Thm.~\ref{thm: Jamiolkowski}).}}
%NB: similarly, $\tr[\rho^C_\phi] = \tr[\rho^J_\phi] = d_B$ if $\phi$ is unital}
Unlike Eq.~(\ref{eq: JI}), Eq.~(\ref{eq: CI}) depends on the choice of basis, in particular, note that transposition (in $C^{-1}$) is a basis-dependent operation.\footnote{We emphasise that basis-dependence refers to the operator basis $\{|i\rangle\langle j|\}_{ij} \in \mathrm{ONB}(\LA)$, not the basis $\{|i\rangle\}_i \in \mathrm{ONB}(\cH_A)$. Necessary and sufficient conditions for the validity of Choi's theorem (Thm.~\ref{thm: Choi}) with respect to different choices of bases $\{e_i\}_i \in \mathrm{ONB}(\LA)$ are analysed in Ref.~\cite{PaulsenShultz2013,Kye2022}.}
%\begin{equation*}
    %\mf{\rho^J_\phi
    %= \sum_i |i\rangle\langle j| \otimes |i\rangle\langle j|
%\end{equation*}
Nevertheless, as a special case of Stinespring's theorem \cite{Stinespring1955}, Choi's theorem yields the following useful characterisation of completely positive maps.

\begin{theorem}[Choi \cite{Choi1975}]\label{thm: Choi}
    Let $\phi \in \LAmB$. Then $\phi$ is completely positive (CP) if and only if $\rho^C_\phi$ is positive. That is, $C$ restricts to a map $C|_{\LAmB_\mathrm{CP}}: \LAmB_\mathrm{CP} \ra \LAxB_+$.
\end{theorem}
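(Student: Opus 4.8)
The plan is to prove both implications separately, with essentially all of the content residing in the converse. The forward direction becomes almost immediate once one recognises that $\rho^C_\phi$ is the image of a single fixed positive operator under the amplified map $\mathrm{id}_A \otimes \phi$. The converse, by contrast, requires me to extract an operator-sum (Kraus) representation of $\phi$ directly from a positive decomposition of $\rho^C_\phi$, using the canonical identification between vectors in $\cH_A \otimes \cH_B$ and operators in $\LAmB$.

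For the forward direction I would rewrite the Choi matrix, following the footnote to Eq.~(\ref{eq: CI}), as $\rho^C_\phi = (\mathrm{id}_A \otimes \phi)(|\Phi\rangle\langle\Phi|)$ with $|\Phi\rangle = \sum_i |i\rangle \otimes |i\rangle$; expanding confirms $(\mathrm{id}_A \otimes \phi)(\sum_{ij}|i\rangle\langle j| \otimes |i\rangle\langle j|) = \sum_{ij}|i\rangle\langle j| \otimes \phi(|i\rangle\langle j|)$. Since $|\Phi\rangle\langle\Phi| \geq 0$ and complete positivity of $\phi$ entails in particular that $\mathrm{id}_A \otimes \phi$ sends positive operators to positive operators, we get $\rho^C_\phi \geq 0$ at once.

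For the converse, suppose $\rho^C_\phi \geq 0$ and take any positive decomposition $\rho^C_\phi = \sum_\alpha |v_\alpha\rangle\langle v_\alpha|$ with $|v_\alpha\rangle \in \cH_A \otimes \cH_B$. Expanding $|v_\alpha\rangle = \sum_i |i\rangle \otimes |\beta^\alpha_i\rangle$ and matching coefficients against $\rho^C_\phi = \sum_{ij}|i\rangle\langle j| \otimes \phi(|i\rangle\langle j|)$ forces $\phi(|i\rangle\langle j|) = \sum_\alpha |\beta^\alpha_i\rangle\langle\beta^\alpha_j|$. I then define operators $V_\alpha \in \LAmB$ by $V_\alpha|i\rangle := |\beta^\alpha_i\rangle$, so that $|\beta^\alpha_i\rangle\langle\beta^\alpha_j| = V_\alpha |i\rangle\langle j| V_\alpha^*$; extending linearly over the basis $\{|i\rangle\langle j|\}_{ij}$ yields the operator-sum form $\phi(a) = \sum_\alpha V_\alpha a V_\alpha^*$ for all $a \in \LA$. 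Each summand $a \mapsto V_\alpha a V_\alpha^*$ is manifestly completely positive, and a finite sum of completely positive maps is completely positive, so $\phi$ is CP.

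The main obstacle I anticipate is not any single hard estimate but the careful bookkeeping of the vector-operator correspondence $|v_\alpha\rangle \leftrightarrow V_\alpha$ and its interaction with the basis-dependent transposition in $C^{-1}$ of Eq.~(\ref{eq: CI}). Concretely, I would need to check that the identification $V_\alpha|i\rangle = |\beta^\alpha_i\rangle$ is precisely the one that reproduces $\phi$ consistently with $\phi(a) = \trA[\rho^C_\phi\,(a^T \otimes \mathbbm{1}_B)]$, i.e.\ that the transposition appearing in the inverse map is exactly what this identification absorbs. This is the point where the convention-dependence emphasised throughout the paper enters, and verifying that it is harmless (so that the reconstructed $\phi$ genuinely equals the original map) is the step deserving the most care.
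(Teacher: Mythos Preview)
Your argument is correct and is essentially Choi's original proof: the forward direction via $n$-positivity applied to $|\Phi\rangle\langle\Phi|$, and the converse via extracting a Kraus decomposition from any positive decomposition of $\rho^C_\phi$. The bookkeeping concern you flag is not a genuine obstacle; the identification $V_\alpha|i\rangle = |\beta^\alpha_i\rangle$ does reproduce $\phi$ on the basis $\{|i\rangle\langle j|\}_{ij}$ and hence by linearity on all of $\LA$, and one checks directly that this is consistent with $C^{-1}$ in Eq.~(\ref{eq: CI}).

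There is, however, nothing to compare against: the paper does not supply its own proof of Thm.~\ref{thm: Choi}. It is stated with attribution to Ref.~\cite{Choi1975} (and noted to be a special case of Stinespring's theorem) and then used as an input to the paper's actual contribution, Thm.~\ref{thm: Choi revisited}, whose proof reduces to Thm.~\ref{thm: Choi} together with Lm.~\ref{lm: complete positivity under *} and the identity $\rho^J_{\phi^*} = \rho^C_\phi$. So your proposal is a faithful reconstruction of the classical argument behind the cited result, not an alternative to anything appearing in the paper.
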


\subsection{A basis-independent variant of the Choi-Jamio\l kowski isomorphism}

The basis dependence in Eq.~(\ref{eq: CI}) is unappealing and has prompted alternative definitions of the Choi-Jamio\l kowski isomorphism. Of course, one can just resort to Eq.~(\ref{eq: JI}), yet Thm.~\ref{thm: Choi} does not hold in this case: given a completely positive map $\phi$, $\rho^J_\phi$ is generally not positive (cf. Ref.~\cite{LeiferSpekkens2013,PaulsenShultz2013,Kye2022}). A formal trick to bypass this issue has been suggested in Refs.~\cite{AllenEtAl2017,BarrettLorenzOreshkov2020} (see also Ref.~\cite{Ozawa2014}), which define a map $\JC: \LAmB \ra \mc{L}(\cH^*_A \otimes \cH_B)$ by\footnote{Ref.~\cite{AllenEtAl2017} defines $a_{A^*} := \tr_{\cH_A}[\tau^\mathrm{id}_A a]$ for $\tau^\mathrm{id}_A := \sum_{ij} |i\rangle_{A^*}\langle j| \otimes |i\rangle_A\langle j|$; hence, $(a_A)_{mn} = (a_{A^*})_{nm}$ as matrices. In this way, the transposition in Eq.~(\ref{eq: CI}) is absorbed into the mapping between $\cH_A$ and its dual $\cH^*_A$.}
\begin{align}\label{eq: CJI* pedestrian}
    \begin{split}
        \rho^{\JC}_\phi = \JC(\phi) &:= \sum_{ij} |i\rangle_{A^*}\langle j| \otimes \phi(|i\rangle\langle j|)\; ,\\
        \JC^{-1}(\rho)(a) %&= \phi^{\JC}_\rho(a)
        &:= \tr_{\cH^*_{A}}[\rho(a_{A^*}\otimes \mathbbm{1}_B)]\; , \quad a_{A^*} := a^T \quad \quad \forall a \in \LA\; .
        %flag: see \cite{BarrettLorenzOreshkov2020}
    \end{split}
\end{align}
Here, $\{|i\rangle_{A^*} \in \cH^*_A\}_i$ corresponds to the dual basis of $\{|i\rangle_A \in \cH_A\}_i$,\footnote{Recall that the dual basis $\{\langle j|_A\}_j$ of $\{|i\rangle_A\}_i$ is uniquely defined by the condition $\langle j|i \rangle_A = \delta_{ij}$.} yet the elements $|i\rangle_{A*}$ are thought of as vectors (rather than dual vectors) such that $\langle j|i\rangle_{A^*} = \langle i|j\rangle_A = \overline{\langle j|i\rangle}_A$, and thus
\begin{align*}\label{eq: C* to J*}
    \begin{split}
        \tr_{\cH^*_A}[\rho^{\JC}_\phi(a_{A^*} \otimes \mathbbm{1}_B)]
        &= \tr_{\cH^*_A}[(\sum_{ij}|i\rangle_{A^*}\langle j| \otimes \phi(|i\rangle\langle j|))(\sum_{mn} |m\rangle (a_{A^*})_{mn}\langle n| \otimes \mathbbm{1}_B)] \\
        &= \sum_{ijkmn} \langle k|i\rangle_{A^*} \langle j|m\rangle_{A^*} (a_{A^*})_{mn} \langle n|k\rangle_{A^*} \ \phi(|i\rangle\langle j|) \\
        &= \sum_{ijkmn} \langle k|n\rangle_A (a_{A})_{nm} \langle m|j\rangle_A \langle i|k\rangle_A \ \phi(|i\rangle\langle j|) \\
        &= \tr_{\cH_A}[(\sum_{ij}|j\rangle\langle i| \otimes \phi(|i\rangle\langle j|))(\sum_{mn} |m\rangle a_{mn}\langle n| \otimes \mathbbm{1}_B)]
        = \trA[\rho^J_\phi(a \otimes \mathbbm{1}_B)]\; .
    \end{split}
\end{align*}
Consequently, Eq.~(\ref{eq: CJI* pedestrian}) is basis-independent by Eq.~(\ref{eq: basis-independence JI}), and by comparison with Eq.~(\ref{eq: CI}), $\JC$ restricts to a map from completely positive maps $\phi$ to positive matrices $\rho^{\JC}_\phi \in \mc{L}(\cH^*_A \otimes \cH_B)_+$. 

Nevertheless, a core feature remains hidden in the details (cf. \cite{Araujo_CJI}).
%NB: the basis-independence in Eq.~(\ref{eq: CJI* pedestrian}) is easily obscured by the basis-dependent operations of transposition and complex conjugation (cf. \cite{Araujo_CJI}).
In the next section, we explicitly identify the physical significance of the transposition in Eq.~(\ref{eq: CI}) with the choice of operator ordering in $\LA$ and $\LB$, which suggests a unified version of the Choi-Jamio\l kowski isomorphism that retains Thm.~\ref{thm: dePillis}, Thm.~\ref{thm: Jamiolkowski}, Thm.~\ref{thm: Choi}, and recovers Eq.~(\ref{eq: CJI* pedestrian}).

\section{A unified version of the Choi-Jamio\l kowski isomorphism}\label{sec: II}

\subsection{Background - Jordan algebras, complex structure and inner product}\label{sec: Jordan vs C*-algebras}

At the heart of the the different versions and properties of the Choi-Jamio\l kowski isomorphism lies the subtle distinction between
%real and complex quantum theory. Much has been written on this subject, and we do not intend to review it here.]} Instead, we focus on the particular viewpoint relevant for our purposes: the difference between
Jordan and $C^*$-algebras. We will only need the bare essentials here, and liberally refer to Ref.~\cite{Hanche-OlsenStormer1984JordanOA,AlfsenShultz1998a,AlfsenShultz1998b} for many more details.\\

\textbf{Jordan algebras.} The operator product on $\LH$ admits the decomposition,
\begin{equation}\label{eq: decomposition of operator product}
    ab = \frac{1}{2}\{a,b\} - \frac{1}{2}[a,b]\; ,
\end{equation}
where $[a,b] = ab-ba$ denotes the commutator and $\{a,b\} := ab+ba$ denotes the anti-commutator. $(\LH,[\cdot,\cdot])$ defines a Lie algebra while $\JH := (\LH,\{\cdot,\cdot\})$ defines a Jordan algebra.\footnote{A Jordan algebra $(J,\circ)$ %(over a field $\mathbb{K}$)
is commutative (but generally non-associative) and satisfies the Jordan identity: $(a\circ b)\circ(a\circ a) = a\circ(b\circ(a\circ a))$ for all $a,b \in J$. For a classification of Jordan algebras, see Ref.~\cite{JordanVNWigner1934,Hanche-OlsenStormer1984JordanOA,McCrimmon2004}.}\footnote{Overloading notation slightly, we write $\LH$ for the linear space and for the algebra of operators on $\cH$. \label{fn: overload LH}} Importantly, already $\JH_\mathrm{sa}$ defines a real Jordan algebra, in particular, the anti-commutator closes on self-adjoint (Hermitian) elements---unlike the operator product in $\LH$. Indeed, this was the original motivation to study Jordan algebras in Ref.~\cite{Jordan1933,JordanVNWigner1934}, viz. as generalisations of observable algebras in quantum mechanics.
%\footnote{\cmf{\cite{JordanVNWigner1934} classify finite-dimensional, formally real Jordan algebras $J$, for which $\sum_i a_i^2 = 0$ implies $a_i = 0$ for all $a_i \in J$. For the general classification of Jordan algebras, see \cite{Hanche-OlsenStormer1984JordanOA,McCrimmon2004}.}}

Not every real Jordan algebra $\cJ$ arises as a self-adjoint part (closed under the anti-commutator) from a $C^*$-algebra $\cA$, i.e., $\cJ \not\subseteq (\cA_\mathrm{sa},\{\cdot,\cdot\})$ in general.\textsuperscript{\ref{fn: overload LH}} If it does, $\cJ$ is called \emph{special}, otherwise it is called an \emph{exceptional} Jordan algebra. The key to identify special Jordan algebras is the double role played by self-adjoint operators in $\LH$: as observables and as generators of symmetries \cite{GrginPetersen1974,AlfsenShultz1998b}.\footnote{Jordan algebras whose elements mimic this dichotomy are characterised by the existence of a \emph{dynamical correspondence} \cite{AlfsenShultz1998a,AlfsenShultz1998b} (see also Ref.~\cite{Baez2022}). In turn, for every dynamical correspondence on $\cJ$, there is an associative $C^*$-algebra $\cA$ from which $\cJ$ arises as a self-adjoint part $\cJ \subseteq (\cA_\mathrm{sa},\{\cdot,\cdot\})$ (Thm.~23 in Ref.~\cite{AlfsenShultz1998a}).\label{fn: dynamical correspondences}} In the case of $\JHsa$, there are exactly two such associative products that extend the anti-commutator in $\JHsa$ (cf. \cite{AlfsenShultz1998a}),
\begin{equation}\label{eq: two associative products}
    a \cdot_+ b := \frac{1}{2}\{a,b\} + \frac{1}{2}[a,b] = ab \quad \quad \quad a \cdot_- b := \frac{1}{2}\{a,b\} - \frac{1}{2}[a,b] = ba\; ,
\end{equation}
for all $a,b \in \JH$. In other words, $\JHsa$ arises as the self-adjoint part of both $\LH = (\LH,\cdot_+)$ and its opposite algebra \mf{$\LHop = (\LH,\cdot_-)$}, %\footnote{\mf{Here, $*$ in $\JH^* = \overline{\JH}$ denotes complex conjugation in $\JH$ (viewed as a complex-linear space). In contrast, the Hermitian adjoint on $\LH$ is composed of $*$ on $\JH$ and the order reversal $\cdot_\pm \ra \cdot_\mp$.\label{fn: stars}}}
for which the order of composition is reversed with respect to $\LH$.\textsuperscript{\ref{fn: overload LH}} Since transposition $T$ is an order-reversing involution on $\LH$, %($(ab)^T = b^Ta^T$ for all $a,b \in \LH$),
it maps between $\LH$ and $\LHop$. We will see that the distinction between associative $C^*$-algebras $\LH$ and $\LHop$ is at the heart of the different versions of the Choi-Jamio\l kowski isomorphism and in turn of Choi's theorem (see Thm.~\ref{thm: Choi revisited} below). Before doing so, we briefly expand on the physical meaning of this distinction.\\

\textbf{Time orientations and complex structure.} Note that the two (associative) products in Eq.~(\ref{eq: two associative products}) are readily distinguished from one another by the sign in front of the commutator. This sign obtains a physical interpretation as the \emph{direction of time} as follows (cf. Ref.~\cite{AlfsenShultz1998a,FrembsDoering2019b,Doering2014,FrembsDoering2022a,Frembs2022a}). When described by $\LH$, the evolution of the system can be expressed via the map
\begin{equation}\label{eq: canonical time orientation}
    \Psi(t,a)(b) = e^{ita}be^{-ta} \quad \quad \forall a,b \in \JHsa, t\in\R\; . 
\end{equation}
%\Psi: \R \times \JHsa \ra \mathrm{Aut}(\JHsa)

\vspace{1.5cm}

Note that $\Psi$ describes unitary evolution in $\LH$ (as opposed to anti-unitary evolution \cite{Wigner_GruppentheorieUndQM,Bargmann1964}). With Ref.~\cite{FrembsDoering2022a,FrembsDoering2019b,Frembs2022a}, we therefore call it the \emph{canonical time orientation on $\LH$}.
%Crucially, quantum states are sensitive to the choice of time orientations in subsystems \cite{FrembsDoering2022a,FrembsDoering2022b}.
The relation to the choice of sign in Eq.~(\ref{eq: decomposition of operator product}) follows by differentiation: for all $a \in \JHsa$
\begin{align}\label{eq: commutator as infinitesimal}
    \frac{d}{dt}\big|_{t=0}\Psi(t,a) = i[a,\cdot]\; .
\end{align}
%%%In fact, a stronger version of the Peres-Horodecki criterion \cite{Peres1996,Horodeckisz1996} shows that it is precisely entanglement which encodes a relative time flow between subsystems \cite{Frembs2022a}.
%%%At the heart of this is that bipartite quantum states are states on $C^*$-algebras rather than the respective Jordan algebras. In particular, every (and only an) entangled state relates the choice of $\LA/\LAop$ with respect to $\LB/\LBop$. Said differently, a quantum state $\rho \in (\LA \otimes \LB)_+$ will generally not be a quantum state $\rho \notin (\LAop \otimes \LB)_+$. From a physical perspective, this information can be interpreted as a relative time orientation between the respective systems. We refer to \cite{FrembsDoering2022a,Frembs2022a} for the details.\\

This suggests a slight reformulation of the decomposition in Eq.~(\ref{eq: decomposition of operator product}) into
\begin{equation}\label{eq: complex structure on LH}
    \delta_a(b) := a\circ b - ia \star b\; ,
\end{equation}
where $a\circ b := \frac{1}{2}\{a,b\}$ and $a \star b := \frac{i}{2}[a,b]$ for all $a,b \in \JH$. Indeed, Eq.~(\ref{eq: complex structure on LH}) describes the natural decomposition of the \emph{order derivation}\footnote{An order derivation $\delta$ on a Jordan algebra $\cJ$ is a bounded linear operator such that $e^{t\delta}(b) \in \cJ_+$ for all $b \in \cJ_+$ and $t \in \mathbb{R}$ \cite{AlfsenShultz1998a}.} $\delta_a: \JH_+ \ra \JH_+$, $a \in \JH$ given by $\delta_a(x) := \frac{1}{2}(ax+xa^*) = \delta_a(x) + \delta_{ia}(x)$, into a \emph{self-adjoint} part $\delta_a$, $a \in \JHsa$ and a \emph{skew} part $\delta_{ia}$, $a \in \JHsa$ (Lm.~11 in Ref.~\cite{AlfsenShultz1998a}).\footnote{Note that while the operator product in $\LH$ %(equivalently, the commutator)
does not close on self-adjoint elements, since the commutator of two Hermitian operators is skew-Hermitian, the product $\star$ \emph{does} close on self-adjoint elements.}
%NB: $(\LH,\circ,\star)$ becomes a two-product algebra (cf. \cite{GrginPetersen1974}).
Moreover, it turns out that skew order derivations %are precisely those that
induce Jordan homomorphisms of the form $\Psi(t,a) = e^{2t\delta_{ia}}$ (Lm.~9 and Lm.~14 in Ref.~\cite{AlfsenShultz1998a}).

Importantly, Eq.~(\ref{eq: complex structure on LH}) defines a complex structure on $\JH$---viewed as the space of order derivations on $\JHsa$ \cite{Connes1974}---\mf{and it follows with Eq.~(\ref{eq: two associative products}) and Eq.~(\ref{eq: commutator as infinitesimal}) that such a complex structure promotes the Jordan algebra $\JH$ to the $C^*$-algebra $\LH$.\footnote{A  complex structure on the order derivations of a Jordan algebra is also called a \emph{Connes orientation} \cite{Connes1974}.}
%\footnote{\cmf{A \emph{Connes orientation} is a complex structure on the space of order derivations on a Jordan algebra. Similar to dynamical correspondences (cf. Fn.~\ref{fn: dynamical correspondences}), the existence of a Connes orientation indicates that a Jordan algebra arises from the self-adjoint part of some von Neumann algebra \cite{Connes1974}.
%%Dynamical correspondences are more general than Connes' orientations. However,
%In finite dimensions (more generally, for JBW algebras), a dynamical correspondence determines a unique Connes orientation and vice versa \cite{AlfsenShultz1998a}}.}
Furthermore, we find that under this identification the `complex conjugation' map $\LH = \LHsa + i\LHsa \leftrightarrow \LH^* := \LHsa - i\LHsa$ is given by the Hermitian adjoint $*$ on $\LH$ (cf. Prop.~15 in Ref.~\cite{AlfsenShultz1998a}).}
%Note that $\LH \cong \JH$ ($\LHsa \cong \JHsa)$ as complex (real) linear spaces.
Clearly, $*$ is an order-reversing involution on $\LH$, and by comparing Eq.~(\ref{eq: decomposition of operator product}), Eq.~(\ref{eq: commutator as infinitesimal}) and Eq.~(\ref{eq: complex structure on LH}) it induces a change in time orientation on $\JHsa$.\footnote{This central observation is at the heart of Thm.~3 in Ref.~\cite{FrembsDoering2022a} and Thm.~2 in Ref.~\cite{Frembs2022a}.} To emphasise the relation with (skew) order derivations and time orientations, below we will use the Hermitian adjoint as order-reversing map between $C^*$-algebras $\LH$ and $\LHop \cong \LH^*$.\\
%\mf{To emphasise this relation with order derivations, we will denote the opposite algebra of $\LH$ by $\LH^* \cong \LHop$.}\\
%NB: More generally, the deep connection between the operator $S$ and the intrinsic flow of time on von Neumann algebras is the subject of Tomita-Takesaki theory \cite{Takesaki1970} (see also \cite{ConnesRovelli1994}).

\textbf{Inner product and Born rule.} The Hilbert-Schmidt inner product on $\LH$, %between operators $a,b \in \LH$
%in finite dimensions
\begin{equation}\label{eq: HS inner product}
    (a,b)_{\LH}
    %:= \sum_i \langle b e_i|ae_i\rangle
    %= \sum_i \langle e_i|b^*a e_i\rangle
    := \tr_{\cH}[b^*a]\; ,
\end{equation}
can be used to succinctly encode Eq.~(\ref{eq: JI}) as follows (cf. \cite{Jamiolkowski1972}): for all $a \in \LA$, $b \in \LB$,
\begin{equation}\label{eq: CJI}
    (\rho^J_\phi,a^*\otimes b)_{\LA \otimes \LB}
    = \tr_{\cH_A \otimes \cH_B}[\rho^J_\phi(a\otimes b^*)] = \trB[\phi(a)b^*]
    = (\phi(a),b)_\LB\; .
\end{equation}
Note that (for a single system) Eq.~(\ref{eq: HS inner product}) is a property of $\JH$ only,
\begin{equation}\label{eq: HS inner product - Jordan algebras}
    (a,b)_\LH
    = \tr_{\cH}[b^*a] 
    = \frac{1}{2}\tr_\cH[\{b^*,a\}]\; ,
\end{equation}
since $\tr[ab-ba] = 0$ for all $a,b \in \LH$. The Hilbert-Schmidt inner product therefore restricts to the (complexified) Jordan algebra $\JH$. What is more, restricting to the real part $\JH_\mathrm{sa}$, Eq.~(\ref{eq: HS inner product - Jordan algebras}) is readily identified with the Born rule of the system described by $\JHsa$. In particular, note that we obtain the same Born rule from the inner product on $\LHS$,
\begin{equation}\label{eq: inner product in LH*}
    (a,b)_\LHS
    := \tr_{\cH^*}[b \cdot_- a^*]
    = \frac{1}{2}\tr_{\cH^*}[\{b,a^*\}]
    = (\tr_\cH[b^* \cdot_+ a])^*
    = \overline{(a,b)}_\LH\; . %\mf{= (b,a)_\LH}\; .
\end{equation}
%NB: this equation is written `from the perspective of $\LH$'

Together with Eq.~(\ref{eq: CJI}) this suggest to study the Choi-Jamio\l kowski isomorphism under restriction to Jordan algebras.\footnote{\cmf{While we do not pursue this here, we remark that this raises the question of defining a tensor product of (special) Jordan algebras. This is naturally done in reference to their ambient complex $C^*$-algebras \cite{Hanche-Olsen1983} (see also Ref.~\cite{GrginPetersen1976}), which requires a choice of dynamical correspondence \cite{AlfsenShultz1998a}, equivalently time orientation in addition to the mere Jordan algebra structure. Importantly, this choice is not arbitrary, i.e., the Born rule of a composite quantum system \emph{does} depend on the relative choice of $C^*$-algebras \cite{FrembsDoering2022a,FrembsDoering2022b,Frembs2022a}.}} In the next section, we will show that Thm.~\ref{thm: dePillis} and Thm.~\ref{thm: Jamiolkowski} obtain under the restriction to Jordan algebras. In contrast, %and as a consequence of quantum entanglement \cite{FrembsDoering2019b,FrembsDoering2022a,FrembsDoering2022b,Frembs2022a}
we will see that Thm.~\ref{thm: Choi} \emph{does} depend on the choice of complex structure \cite{Connes1974} (equivalently, dynamical correspondence \cite{AlfsenShultz1998a} or time orientation \cite{FrembsDoering2022a,Frembs2022a}) and thus on the choice of $C^*$-algebra $\LH$ or $\LHS$.
%equivalently on the choice of operator ordering in Eq.~(\ref{eq: two associative products})

%NB: Conversely, we may interpret the latter as a real structure on $\JH$,
%\begin{equation}\label{eq: HS real part}
    %\tr[\{b^*,a\}]
    %= (a,b)_\LH + \overline{(a,b)}_\LH
    %= (a,b)_\LH + (a^*,b^*)_{\LHS}\; ,
%\end{equation}
%cf. Riemannian metric + symplectic structure --> K\"ahler structure in \cite{AshtekarSchilling1997,?}\\

%flag: Note that a real structure (here, defined by the Born rule) is generally not compatible with composition under tensor products (cf. \cite{Baez2012}).

%flag: geometric QM
%Geometric quantum mechanics \cite{Kibble,Ashtekar,Shillling,?} starts from the decomposition of the inner product between vectors in a Hilbert space $v,w \in \cH$,
%\begin{equation*}
    %\langle v,w \rangle = G(v,w) + i\Omega(v,w)\; ,
%\end{equation*}
%into a symmetric, positive-definite and an anti-symmetric bilinear form, $G$ and $\Omega$, respectively. Together these define a K\"ahler structure\footnote{} on the space of rays in Hilbert space, which interweaves Riemannian geometry and symplectic geometry.
%Observables $a \in \LH$ define (expectation value) functions $f_a(v) := \langle v|a|v\rangle$ on this space. Moreover, the symmetric and anti-symmetric part of the operator product between observables are directly related to the above form by
%\begin{align*}
    %f_{\{a,a'\}}(v) = G(X_{f_a},X_{f_{a'}}) \quad \quad f_{[a,a']}(v) = \Omega(X_{f_a},X_{f_{a'}})\; .
%\end{align*}

\subsection{Hermiticity-preserving, positive, decomposable and completely positive maps}\label{sec: reformulations}

In order to expose the significance of the choice of operator ordering on $\JA$ and $\JB$ in Thm.~\ref{thm: dePillis}, Thm.~\ref{thm: Jamiolkowski} and Thm.~\ref{thm: Choi}, we study the (linear) isomorphism $J$ as defined in Eq.~(\ref{eq: CJI}) under the restriction to Jordan algebras for different types of maps $\phi: \JA \ra \JB$. Note that $\LA \cong \JA$ ($\LAsa \cong \JAsa)$ as complex (real) linear spaces (similarly for $\LB$).\\

\textbf{De Pillis's theorem and Jamio\l kowski's theorem revisited.} We start with Hermiticity-preserving maps, $\phi: \JA \ra \JB$ such that $^*\phi = * \circ \phi = \phi \circ *= \phi^*$. %$\phi(a^*) = \phi^*(a)$ for all $a \in \JA$.
This is equivalent to the condition that $\phi$ preserves the self-adjoint parts of $\JA$ and $\JB$, i.e., that $\phi$ restricts to a linear map $\phi|_{\JAsa}: \JAsa \ra \JBsa$.\footnote{Conversely, every linear map $\phi: \JAsa \ra \JBsa$ has a unique linear extension to the complexification $\JA := \JAsa +i\JAsa$, %(as linear spaces), [i.e., $\phi: \LA \ra \LB$ is
given by $\phi(a + ib) := \phi(a) +i\phi(b)$. The same is true for Hermiticity-preserving maps with additional structure (see Thm.~\ref{thm: dePillis for Jordan algebras} and Thm.~\ref{thm: Jamiolkowski for Jordan algebras}).\label{fn: maps between Jordan algebras}}

It follows that the set of operators $\rho_\phi^J$ corresponding to Hermiticity-preserving maps $\phi: \LA \ra \LB$ under Eq.~(\ref{eq: JI}) is the same as the set of  operators $\rho_\phi^J$ corresponding to linear maps $\phi: \JAsa \ra \JBsa$. Thm.~\ref{thm: dePillis} thus applies to (complex) Jordan algebras:

\begin{theorem}[de Pillis for Jordan algebras \cite{dePillis1967}]\label{thm: dePillis for Jordan algebras}
    Let $\phi: \JA \ra \JB$. Then $\phi$ restricts to a map $\phi|_{\JAsa}: \JAsa \ra \JBsa$ if and only if $J(\phi)$ is Hermitian.
\end{theorem}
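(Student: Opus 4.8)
The plan is to reduce Theorem~\ref{thm: dePillis for Jordan algebras} to the already-established Theorem~\ref{thm: dePillis} by exploiting the linear identification $\LA \cong \JA$ (and likewise for $\LB$) as complex linear spaces, so that a map $\phi: \JA \to \JB$ and a map $\phi: \LA \to \LB$ carry exactly the same data at the level of the underlying vector spaces. The key observation, stated in the text preceding the theorem, is that $\phi$ is Hermiticity-preserving in the sense $^*\phi = \phi^*$ if and only if $\phi$ restricts to a linear map $\phi|_{\JAsa}: \JAsa \to \JBsa$ of the self-adjoint parts. Granting this equivalence, the set of operators $\rho^J_\phi$ arising from Hermiticity-preserving linear maps $\LA \to \LB$ coincides with the set arising from linear maps $\JAsa \to \JBsa$, and Theorem~\ref{thm: dePillis} transfers verbatim.

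First I would establish the bridge between the two descriptions of $\phi$. In one direction, if $\phi: \JA \to \JB$ satisfies $*\circ\phi = \phi\circ *$, then for self-adjoint $a = a^*$ we get $\phi(a)^* = (\phi\circ *)(a) = (*\circ\phi)(a)^{\phantom{*}}\!\!$, hence $\phi(a)$ is self-adjoint, so $\phi$ does restrict to $\phi|_{\JAsa}: \JAsa \to \JBsa$. Conversely, given any linear $\phi: \JAsa \to \JBsa$, I would invoke the unique complex-linear extension to $\JA = \JAsa + i\JAsa$ recorded in footnote~\ref{fn: maps between Jordan algebras}, namely $\phi(a+ib) := \phi(a) + i\phi(b)$ for $a,b \in \JAsa$; one then checks directly from this formula that the extension commutes with $*$, i.e.\ $^*\phi = \phi^*$. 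This is the step I expect to be the only one requiring genuine (though entirely routine) verification, since one must confirm both that the extension is well defined and that Hermiticity-preservation is equivalent to the self-adjoint restriction property.

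Next I would note that the Choi-Jamio\l kowski map $J$ of Eq.~(\ref{eq: JI}) depends on $\phi$ only through its action as a linear map of the underlying complex vector space $\LA \cong \JA$ into $\LB \cong \JB$; the Jordan (or $C^*$) product plays no role in the defining sum $\sum_k e_k^* \otimes \phi(e_k)$. Consequently $J$ assigns the same operator $\rho^J_\phi$ whether $\phi$ is viewed as a map of matrix algebras or of Jordan algebras, and the correspondence between Hermiticity-preserving maps and operators $\rho^J_\phi$ is unaffected by which viewpoint we adopt.

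Finally, combining these observations, a map $\phi: \JA \to \JB$ restricts to $\phi|_{\JAsa}: \JAsa \to \JBsa$ precisely when the associated map of matrix algebras is Hermiticity-preserving, which by Theorem~\ref{thm: dePillis} holds precisely when $J(\phi)$ is Hermitian; this is exactly the claimed equivalence. The essential content is therefore not a new computation but the conceptual point, emphasised throughout Sec.~\ref{sec: Jordan vs C*-algebras}, that de Pillis's criterion is insensitive to the choice of associative product extending the Jordan structure and hence already lives at the level of Jordan algebras. The main (mild) obstacle is simply making the identification $\LA \cong \JA$ precise enough that "the same $\rho^J_\phi$" is unambiguous, after which the result is immediate.
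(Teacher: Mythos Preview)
Your proposal is correct and follows essentially the same route as the paper, which does not give a standalone proof but rather justifies the statement in the paragraph immediately preceding it: the equivalence between $^*\phi = \phi^*$ and the restriction $\phi|_{\JAsa}: \JAsa \to \JBsa$ (together with the unique complex-linear extension in footnote~\ref{fn: maps between Jordan algebras}) identifies the two classes of maps, whence Thm.~\ref{thm: dePillis} transfers directly. Your write-up merely makes explicit the routine verifications the paper leaves implicit.
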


Next, recall that a map $\phi: \LA \ra \LB$ is positive if and only if $\phi(a) \in \LB_+$ whenever $a \in \LA_+$, and that a positive operator is in particular self-adjoint. Hence, $\LA_+ \subset \JAsa$ (as real linear spaces) such that the identification between operators $\rho_\phi^J \in \LAxB$ that are positive on pure tensors and positive maps $\phi \in \LAmB_+$ also holds on the level of Jordan algebras:
%We thus conclude that Thm.~\ref{thm: dePillis} and Thm.~\ref{thm: Jamiolkowski} are more naturally stated over the respective Jordan algebras, i.e., \mf{we obtain the following reformulations}.

\begin{theorem}[Jamio\l kowski for Jordan algebras \cite{Jamiolkowski1972}]\label{thm: Jamiolkowski for Jordan algebras}
    Let $\phi: \JAsa \ra \JBsa$.
    \begin{enumerate}
        \item[(a)] $\phi$ is positive if and only if $J(\phi)$ is positive on pure tensors (POPT).
        %That is, $J$ restricts to a map $\mf{J_+: \mc{J}(\cH_A,\cH_B)_+ \ra \mc{J}(\cH_A \otimes \cH_B)_\mathrm{POPT}}$.
        \item[(b)] Let further $^*\phi = \phi^*$. %$\phi(a^*) = \phi^*(a)$ for all $a \in \LA$.
        Then $\phi$ is trace-preserving if and only if $\trB[J(\phi)] = \mathbbm{1}_A$.
    \end{enumerate}
\end{theorem}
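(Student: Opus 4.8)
The plan is to deduce Theorem~\ref{thm: Jamiolkowski for Jordan algebras} directly from the original Jamio\l kowski theorem (Thm.~\ref{thm: Jamiolkowski}) by establishing that the relevant classes of maps and the corresponding operators $\rho^J_\phi$ coincide under the identifications $\LA \cong \JA$ and $\LAsa \cong \JAsa$ as (complex, resp. real) linear spaces. The key observation, as foreshadowed in the preceding text, is that the isomorphism $J$ from Eq.~(\ref{eq: JI}) is \emph{linear} and \emph{basis-independent} (Eq.~(\ref{eq: basis-independence JI})), and that its defining formula refers only to the linear structure of the operator spaces, not to their associative products. Hence $J$ makes sense verbatim for a map $\phi: \JAsa \ra \JBsa$ between the self-adjoint Jordan algebras, with no reference to whether the ambient product is $\cdot_+$ or $\cdot_-$.

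For part (a), I would argue as follows. A map $\phi: \JAsa \ra \JBsa$ extends uniquely by complex linearity to $\phi: \JA \ra \JB$ (footnote~\ref{fn: maps between Jordan algebras}), and under the linear isomorphism $\LA \cong \JA$ this is exactly a Hermiticity-preserving map $\phi: \LA \ra \LB$ in the sense of Thm.~\ref{thm: dePillis}. The crucial point is that \emph{positivity} of $\phi$ is an order-theoretic, not an algebraic, notion: the positive cone $\LA_+$ depends only on the Jordan structure, since $a \in \LA_+$ iff $a = b \cdot_+ b^* = b^* \cdot_- b$ is self-adjoint with non-negative spectrum, a condition invariant under passing to the opposite product. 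Thus ``$\phi$ positive'' means the same thing whether $\phi$ is regarded as a map of $C^*$-algebras or of Jordan algebras, and likewise the POPT condition $\tr[J(\phi)(a \otimes b)] \geq 0$ for all $a \in \LA_+$, $b \in \LB_+$ refers only to the cones and the trace, both Jordan-intrinsic by Eq.~(\ref{eq: HS inner product - Jordan algebras}). Therefore the equivalence in Thm.~\ref{thm: Jamiolkowski}(a) transports unchanged, and part (a) follows.

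For part (b), the hypothesis $^*\phi = \phi^*$ is precisely the statement that $\phi$ restricts to $\JAsa \ra \JBsa$ (by the same reasoning as in Thm.~\ref{thm: dePillis for Jordan algebras}), so it is automatically satisfied in the present setting. The trace-preservation condition and the partial-trace condition $\trB[J(\phi)] = \mathbbm{1}_A$ both involve only the trace and the linear structure, which restrict to $\JH$ by Eq.~(\ref{eq: HS inner product - Jordan algebras}); hence this equivalence, too, carries over verbatim from Thm.~\ref{thm: Jamiolkowski}(b).

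The main obstacle—more a matter of care than of difficulty—is to verify explicitly that the positive cone and the trace are genuinely Jordan-intrinsic and do not secretly depend on the choice of associative product $\cdot_+$ versus $\cdot_-$. This is the whole conceptual content flagged in Sec.~\ref{sec: Jordan vs C*-algebras}: the equivalences of de Pillis and Jamio\l kowski survive the restriction to Jordan algebras precisely \emph{because} they are phrased in terms of Hermiticity, positivity and trace, all of which live on $\JHsa$, whereas (by the contrast drawn in the text) Choi's theorem will fail to be product-independent because complete positivity does depend on the $C^*$-structure. Once this intrinsicness is secured, no new computation is needed: parts (a) and (b) are immediate corollaries of Thm.~\ref{thm: Jamiolkowski} together with the linear-space identifications $\LAsa \cong \JAsa$.
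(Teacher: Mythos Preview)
Your proposal is correct and follows essentially the same approach as the paper: the paper's justification for Thm.~\ref{thm: Jamiolkowski for Jordan algebras} is the short paragraph immediately preceding it, which observes that positive operators are self-adjoint so that $\LA_+ \subset \JAsa$, and hence the identification between POPT operators and positive maps in Thm.~\ref{thm: Jamiolkowski} carries over to the Jordan level unchanged. Your argument is a more carefully spelled-out version of precisely this reduction, with the added (and apt) emphasis that the positive cone and trace are Jordan-intrinsic via Eq.~(\ref{eq: HS inner product - Jordan algebras}).
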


\textbf{Decomposable and completely positve maps.} We can study the isomorphism under the restriction to Jordan algebras for maps with even more structure. Recall that a Jordan $*$-homomorphism $\Phi: \JA \ra \JB$ is a map such that $^*\Phi = * \circ \Phi = \Phi \circ * = \Phi^*$, and which preserves the anti-commutator in the respective algebras, i.e., $\Phi(\{a,a'\}) = \{\Phi(a),\Phi(a')\}$ for all $a,a' \in \JA$. More generally, a map $\phi: \JA \ra \JB$ is called decomposable, denoted by $\phi \in \mc{L}(\cH_A, \cH_B)_\mathrm{D}$, if it is of the form $\phi = v^* \Phi v$, where $v: \cH_B \ra \cK$ is linear and $\Phi: \JA \ra \JK$ is a Jordan $*$-homomorphism \cite{Stormer1982}. Since decomposable maps are by definition maps between Jordan algebras, we can define the corresponding set of operators $\rho_\phi^J$ (under $J$ in Eq.~(\ref{eq: JI})), denoted by $\mc{L}(\cH_A \otimes \cH_B)_\mathrm{D}$.
%NB: \footnote{We remark that the set of operators $\rho_\phi^J$ corresponding to decomposable maps satisfies the analogous version of the Peres-Horodecki-criterion in \cite{Peres1996,Horodeckisz1996} (see Prop.~2 in \cite{Frembs2022a}). In turn, this can be used to prove a classification of bipartite entanglement (for details, see \cite{Frembs2022a}).}

What about completely positive (CP) maps? Can we capture the set of CP maps (and thus the set of operators under Eq.~(\ref{eq: CI})) by studying their restriction to real
Jordan algebras? This is not the case! To see this, it is helpful to consider the classification of completely positive maps due to Stinespring \cite{Stinespring1955}: every completely positive map $\phi: \LA \ra \LB$ is of the form $\phi = v^*\Phi v$, where $v: \cH_B \ra \cK$ is linear and $\Phi: \LA \ra \LK$ is a $C^*$-algebra homomorphism, i.e., $^*\Phi = \Phi^*$ and $\Phi(aa') = \Phi(a)\Phi(a')$ for all $a,a' \in \LA$.
%NB: Note that \cite{Stormer1982} and \cite{Choi1975} are both special cases of \cite{Stinespring1955}.

Of course, every $C^*$-algebra homomorphism restricts to a Jordan $*$-homomorphism $\Phi|_{\JA}: \JA \ra \JK$, however, the converse is generally not true: a Jordan $*$-homomorphism $\Phi: \JA \ra \JK$ only preserves commutators up to sign \cite{AlfsenShultz1998a}. More precisely, either $\Phi(aa') = \Phi(a)\Phi(a')$ or $\Phi(aa') = \Phi(a')\Phi(a)$, equivalently either $\Phi: \LA \ra \LK$ or $\Phi: \LAop \ra \LK$ is a $C^*$-algebra homomorphism \cite{Kadison1951}.

Recall (from Eq.~(\ref{eq: two associative products})) that both $\LA$ and $\LA^\mathrm{op}$ reduce to the same Jordan algebra $\JAsa$. It thus follows from Stinespring's theorem that restricting the Choi-Jamio\l kowski isomorphism to Jordan algebras involves both CP maps $\phi = v^*\Phi v: \LA \ra \LB$ and CP maps $\phi = v^*\Phi v: \LAop \ra \LB$. Evidently, a CP map $\phi: \LA \ra \LB$ is generally not also completely positive as a map $\phi: \LAop \ra \LB$,\footnote{Analysing when this is the case %, i.e., when $\phi \in \mc{L}(\cH_A, \cH_B)_\mathrm{CP} \cap \mc{L}(\cH_A, \cH_B)_\mathrm{CCP}$,
is closely related to Peres' separability criterion %for bipartite states
\cite{Peres1996,Horodeckisz1996} (see also Ref.~\cite{Frembs2022a}).} As an immediate consequence, we deduce that set of operators $\rho_\phi^J \in \mc{L}(\cH_A \otimes \cH_B)_\mathrm{D}$ (corresponding to decomposable maps) is strictly larger than the set of operators $\rho_\phi^J$ corresponding to CP maps, since
\begin{equation*}\label{eq: cp and dec maps}
    \mc{L}(\cH_A, \cH_B)_\mathrm{CP} 
    \subsetneq \mc{L}(\cH_A, \cH_B)_\mathrm{D}\; .
    %NB: decomposable in the sense of \cite{Horodeckisz2009}, compares to decomposability in the sense of \cite{Stormer1963} under v
    %= \mc{L}(\cH_A, \cH_B)_\mathrm{CP} \mf{+} \mc{L}(\cH_A, \cH_B)_\mathrm{CCP}\; ,
\end{equation*}
%where $\mc{L}(\cH_A, \cH_B)_\mathrm{CCP} := \mc{L}(\cH_A, \cH_B)_\mathrm{CP} \circ T \stackrel{\mathrm{(Lm.~\ref{lm: complete positivity under *}})}{=} \{\phi: \LAop \ra \LB \mid \phi \text{\ is\ a\ CP\ map}\}$ \cite{Stormer1963,Majewski2020}.\footnote{Maps in $\mc{L}(\cH_A, \cH_B)_\mathrm{CCP}$ are called completely co-positive (CCP) maps  \cite{Woronowicz1976,Majewski2020}.}
The above argument shows that Choi's theorem cannot be restricted to Jordan algebras as in the case of Thm.~\ref{thm: dePillis for Jordan algebras} and Thm.~\ref{thm: Jamiolkowski for Jordan algebras}; in essence: \emph{positivity of (quantum) states $\rho \in \mc{L}(\cH_A \otimes \cH_B)_+$ (equivalently by Thm.~\ref{thm: Choi}, complete positivity for maps $C^{-1}(\rho)$) depends on the (relative) operator ordering between the algebras $\LA$ and $\LB$} (cf. \cite{FrembsDoering2022a,FrembsDoering2022b,Frembs2022a}).

\vspace{0.4cm}

\textbf{Choi's theorem revisited.} The distinction between $C^*$-algebras in Sec.~\ref{sec: Jordan vs C*-algebras} exposes why Choi's theorem (Thm.~\ref{thm: Choi}) does not hold for the isomorphism $J$ in Eq.~(\ref{eq: JI}):\footnote{That is, given a completely positive map $\phi: \LA \ra \LB$, $\rho_\phi^J \in \LAxB$ is generally not positive.}
%\begin{center}
%\end{center}
%\vspace{0.2cm}
\emph{$J$ and $C$ are implicitly defined with respect to different %(co-)domains, that is, with respect to different
$C^*$-algebras}. Explicitly, since transposition changes the order of composition from $\LH$ to $\LHop$, $C^{-1}(\rho)$ in Eq.~(\ref{eq: CI}) is more naturally identified with a map $C^{-1}(\rho): \LA^\mathrm{op} \ra \LB$ rather than with a map $\LA \ra \LB$---the latter being the natural identification for $J^{-1}(\rho)$ (see also Eq.~(\ref{eq: CJ vs JC}) below). Taking this into account will lead to a reformulation of Thm.~\ref{thm: Choi} in Thm.~\ref{thm: Choi revisited} below.

In order to distinguish between the different operator orderings in $\LH$ and $\LHop$ in a basis-independent way, we replace the basis-dependent operation of transposition with the order-reversing involution on the complex structure in Eq.~(\ref{eq: complex structure on LH}), given by the Hermitian adjoint $*$ (cf. Sec.~\ref{sec: Jordan vs C*-algebras}). \mf{Note from above that $\LH^* \cong \LHop$.} %as $C^*$-algebras
%\mf{Accordingly, we will denote the respective algebras under this mapping by $\LH$ and $\LH^*$. Clearly, $\LH$ and $\LH^*$ are distinguished by a change in operator ordering; physically, by a change in time orientation (cf. Eq.~(\ref{eq: canonical time orientation})).}

We will need the following straightforward result, which is implicit in Ref.~\cite{FrembsDoering2022a,FrembsDoering2022b,Frembs2022a}.\footnote{Lm.~\ref{lm: complete positivity under *} becomes Lm.~4 in Ref.~\cite{Frembs2022a}, when expressed in terms of time orientations (see also Ref.~\cite{FrembsDoering2022a}).}

\begin{lemma}\label{lm: complete positivity under *}
    Let $\phi \in \LAmB$. Then $\phi: \LA \ra \LB$ is completely positive if and only if $\phi \circ *: \LAS \ra \LB$ is completely positive.
\end{lemma}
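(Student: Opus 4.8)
The plan is to unwind the definitions of complete positivity and of the Hermitian adjoint $*$ as an order-reversing involution, and to reduce the statement to the observation already established in Sec.~\ref{sec: Jordan vs C*-algebras} that $*$ exchanges the two $C^*$-algebra structures $\LA = (\LA, \cdot_+)$ and $\LAS \cong \LAop = (\LA, \cdot_-)$ on the common underlying Jordan algebra $\JAsa$. Recall that $\phi: \LA \ra \LB$ is completely positive if and only if $\mathrm{id}_{M_k} \otimes \phi: M_k(\C) \otimes \LA \ra M_k(\C) \otimes \LB$ is positive for every $k \in \N$. So the goal is to show that this family of positivity conditions is invariant under precomposition with $*$, once one correctly identifies the $C^*$-algebra structure on the source.

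First I would make precise what positivity means for maps out of $\LAS$. An element of $\LAS$ is positive exactly when it is positive as an element of $\LA$, because $*$ is an order-\emph{reversing} involution but positivity is self-adjoint and $a \geq 0$ in $\LA$ iff $a^* = a \geq 0$ iff $a$ is positive in $\LAS$; the order on positive cones coincides even though the product is reversed. The key algebraic input is that $*$ is a $*$-isomorphism $\LAS \xrightarrow{\sim} \LA$ in the sense that it reverses products, $ (a \cdot_- b)^* = b^* \cdot_+ a^* $ matching $a \cdot_- b = b \cdot_+ a$, so that $*: \LAS \ra \LA$ is a genuine (unital, positive) $C^*$-algebra isomorphism. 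Then I would observe that $\phi \circ *: \LAS \ra \LB$ is the composite of the isomorphism $*: \LAS \ra \LA$ with $\phi: \LA \ra \LB$.

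The main step is then immediate from the behaviour of complete positivity under composition with $*$-isomorphisms. Since $*: \LAS \ra \LA$ is a unital $*$-isomorphism, it is itself completely positive with completely positive inverse; hence $\mathrm{id}_{M_k} \otimes (\phi \circ *) = (\mathrm{id}_{M_k} \otimes \phi) \circ (\mathrm{id}_{M_k} \otimes *)$, and the right-hand factor $\mathrm{id}_{M_k} \otimes *$ is a positive (indeed order-preserving) bijection of the positive cones of $M_k(\C) \otimes \LAS$ and $M_k(\C) \otimes \LA$. Therefore $\mathrm{id}_{M_k} \otimes (\phi \circ *)$ is positive for all $k$ if and only if $\mathrm{id}_{M_k} \otimes \phi$ is positive for all $k$, which is exactly the claimed equivalence between complete positivity of $\phi: \LA \ra \LB$ and of $\phi \circ *: \LAS \ra \LB$. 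The reverse implication follows by applying the same argument to $\phi \circ *$ and using $* \circ * = \mathrm{id}$.

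The only subtle point — and the step I expect to require the most care — is verifying that $\mathrm{id}_{M_k} \otimes *$ maps the positive cone of $M_k(\C) \otimes \LAS$ onto the positive cone of $M_k(\C) \otimes \LA$, rather than inadvertently transposing the $M_k(\C)$ factor as well and thereby spoiling positivity. One must apply $*$ only to the $\LAS$ tensor factor, so that the relevant order-reversing map is $\mathrm{id}_{M_k} \otimes *$ and not the full transposition on $M_k(\C) \otimes \LA$; since $\mathrm{id}_{M_k} \otimes \LAS$ and $\mathrm{id}_{M_k} \otimes \LA$ carry the product and opposite-product structures only in the $\LA$-leg while sharing the same $M_k(\C)$, the map $\mathrm{id}_{M_k} \otimes *$ is a genuine $*$-isomorphism between them and hence cone-preserving. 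This is precisely the content that distinguishes Lm.~\ref{lm: complete positivity under *} from the naive (and false) claim that $\phi$ is CP iff $\phi \circ T$ is CP as a map $\LA \ra \LB$, and it is what makes the basis-independent formulation via $*$ the correct one.
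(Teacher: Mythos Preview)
Your argument is correct but proceeds along a different line than the paper's. The paper invokes Stinespring's theorem: writing $\phi = v^*\Phi v$ with $\Phi: \LA \ra \LK$ a $C^*$-homomorphism, it checks directly that $\Phi \circ *: \LAS \ra \LK$ is again a $C^*$-homomorphism, whence $\phi \circ * = v^*(\Phi\circ *)v$ is CP by Stinespring in the reverse direction. You instead work from the definition of complete positivity via $k$-positivity and use that $*: \LAS \ra \LA$ is itself a (unital) $*$-isomorphism, so that $\mathrm{id}_{M_k}\otimes *$ is a $*$-isomorphism $M_k(\C)\otimes\LAS \ra M_k(\C)\otimes\LA$ and therefore a bijection of positive cones; the equivalence then drops out by composition. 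Your route is more elementary in that it avoids the Stinespring dilation altogether, and your remark about not accidentally transposing the $M_k(\C)$ leg is exactly the right diagnostic for why the lemma holds with the domain $\LAS$ but fails if one naively keeps the domain $\LA$. The paper's route, on the other hand, dovetails with its surrounding discussion of decomposable maps and $C^*$- versus Jordan homomorphisms, making the structural point (that $*$ intertwines the two associative products extending $\JAsa$) visible at the level of the dilation $\Phi$ rather than only at the level of cones.
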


\begin{proof}
    By Stinespring's theorem \cite{Stinespring1955}, every completely positive map $\phi: \LA \ra \LB$ is of the form $\phi = v^*\Phi v$ for a linear map $v: \cH_B \ra \cK$ and $\Phi: \LA \ra \LK$ a $C^*$-algebra homomorphism. %In particular, $\Phi^*(a) = \Phi(a^*)$ for all $a \in \LA$.
    It follows that $\Phi \circ *: \LAS \ra \LK$ is also a $C^*$-algebra homomorphism,
    %since $*$ reverses the order on $\LA$, i.e.,
    \begin{equation*}
        (\Phi \circ *)(a \cdot_- a') = \Phi((a'a)^*) = \Phi(a^*a'^*) = \Phi(a^*)\Phi(a'^*) = (\Phi \circ *)(a)(\Phi \circ *)(a')\; .
    \end{equation*}
    Hence, $\phi \circ * = v^*(\Phi \circ *)v: \LAS \ra \LB$ is completely positive \cite{Stinespring1955}.
\end{proof}

Note that a similar result holds for $*$ replaced by any order-reversing involution on $\LA$, in particular, it also holds for $T$. We thus obtain the following reformulation of Thm.~\ref{thm: Choi}.

\begin{theorem}[Choi (reformulated) \cite{Choi1975}]\label{thm: Choi revisited}
    Let $\phi \in \LAmB$. Then $\phi^* = \phi \circ *: \LAS \ra \LB$ is completely positive if and only if $\rho^J_{\phi^*} \in \mc{L}(\cH_A \otimes \cH_B)$ %\cong \LA \otimes \LB
    is positive. That is, $J$ restricts to a map $J|_+: (\LAmB_\mathrm{CP} \circ *) \ra \LAxB_+$.
\end{theorem}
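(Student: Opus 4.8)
The plan is to reduce the claim to Choi's theorem (Thm.~\ref{thm: Choi}) by means of a single algebraic identity, $\rho^J_{\phi^*} = \rho^C_\phi$, after which Lm.~\ref{lm: complete positivity under *} closes the equivalence.

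First I would establish this identity. Since $J$ is basis-independent (Eq.~(\ref{eq: basis-independence JI})), I may evaluate $J(\phi^*)$ in the matrix-unit basis $\{|i\rangle\langle j|\}_{ij}$ that defines $C$. Using $(|i\rangle\langle j|)^* = |j\rangle\langle i|$ and $\phi^* = \phi \circ *$,
\begin{align*}
    \rho^J_{\phi^*}
    = \sum_{ij} (|i\rangle\langle j|)^* \otimes \phi^*(|i\rangle\langle j|)
    = \sum_{ij} |j\rangle\langle i| \otimes \phi(|j\rangle\langle i|)
    = \sum_{ij} |i\rangle\langle j| \otimes \phi(|i\rangle\langle j|)
    = \rho^C_\phi\; ,
\end{align*}
the last step being a relabelling $i \leftrightarrow j$. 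Conceptually this is exactly the message of Sec.~\ref{sec: Jordan vs C*-algebras}: on the matrix units the adjoint $*$ acts as transposition $T$, so precomposing $\phi$ with the order-reversal $*$ absorbs precisely the transposition that separates $C$ (Eq.~(\ref{eq: CI})) from $J$ (Eq.~(\ref{eq: JI})).

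With the identity in hand the equivalences chain together. By Thm.~\ref{thm: Choi}, $\phi: \LA \ra \LB$ is completely positive if and only if $\rho^C_\phi$ is positive, hence if and only if $\rho^J_{\phi^*}$ is positive. By Lm.~\ref{lm: complete positivity under *}, $\phi: \LA \ra \LB$ is completely positive if and only if $\phi^* = \phi \circ *: \LAS \ra \LB$ is completely positive. Combining the two yields the stated biconditional, and the range assertion $J|_+: (\LAmB_\mathrm{CP} \circ *) \ra \LAxB_+$ follows since $\phi \mapsto \phi \circ *$ is a bijection of $\LAmB$.

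The one point to handle with care --- rather than a genuine obstacle --- is the bookkeeping of complex structure in the identity step. As $*$ is conjugate-linear on $\LA$ whereas $T$ is linear, $\phi^* = \phi \circ *$ must be read as a complex-linear map out of $\LAS \cong \LAop$, on which the adjoint becomes linear (cf. $\LH^* \cong \LHop$ in Sec.~\ref{sec: Jordan vs C*-algebras}); it is on this reading that $J(\phi^*)$ is literally $J$ applied to a linear map and the display above is valid. I would also record that, consistently with the remark following Lm.~\ref{lm: complete positivity under *}, the same identity holds verbatim with $T$ in place of $*$, the two agreeing on the matrix units that the formula for $J$ actually probes.
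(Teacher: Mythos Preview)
Your proof is correct and follows essentially the same route as the paper: both establish the identity $\rho^J_{\phi^*} = \rho^C_\phi$ by evaluating $J$ in the matrix-unit basis, then chain Thm.~\ref{thm: Choi} with Lm.~\ref{lm: complete positivity under *}. Your additional remarks on the complex-linearity bookkeeping and the interchangeability of $*$ and $T$ on matrix units are welcome clarifications but do not alter the structure of the argument.
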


\begin{proof}
    By Eq.~(\ref{eq: JI}), $\rho_{\phi^*}^J = \sum_k e^*_k \otimes \phi^*(e_k)$, where $\{e_k\}_k \in \mathrm{ONB}(\LA)$ is any orthonormal basis in $\LA$. From this we compute
    \begin{align*}%\label{eq: basis-independent CI}
        \rho^J_{\phi^*}
        = \sum_k e^*_k \otimes \phi^*(e_k)
        = \sum_{ij} |i\rangle\langle j| \otimes (\phi \circ *)(|j\rangle\langle i|)
        = \sum_{ij} |i\rangle\langle j| \otimes \phi(|i\rangle\langle j|)
        = \rho^C_\phi\; ,
    \end{align*}
    where we fixed the choice of basis to $\{e_k\}_k = \{|i\rangle\langle j|\}_{ij}$ in the second step. By Choi's theorem (Thm.~\ref{thm: Choi}), $\rho^J_{\phi^*} = \rho^C_\phi \in \mc{L}(\cH_A\otimes\cH_B)$ is positive if and only if $\phi: \LA \ra \LB$ is completely positive, if and only if $\phi^*: \LAS \ra \LB$ is completely positive by Lm.~\ref{lm: complete positivity under *}.
\end{proof}

\mf{Note that Thm.~\ref{thm: Choi revisited} complements the results in Ref.~\cite{PaulsenShultz2013,Kye2022}: since the action on itself by (unitary) conjugation preserves the order of composition (cf. Eq.~(\ref{eq: canonical time orientation})), Thm.~\ref{thm: Choi revisited} explains the invariance of Choi's theorem under the basis change in Ref.~\cite{PaulsenShultz2013,Kye2022}.}

We emphasise that---unlike Hermiticity-preserving, positive and decomposable maps, which are maps between Jordan algebras (see Thm.~\ref{thm: dePillis for Jordan algebras} and Thm.~\ref{thm: Jamiolkowski for Jordan algebras})---completely positive maps are maps between $C^*$-algebras: a completely positive map $\phi$ and its corresponding operator $\rho_\phi^{J}$ depend on the relative operator ordering between the respective algebras.\footnote{Equivalently, $\rho_\phi^J$ depends on the (relative) time orientation between the respective algebras (cf. \cite{FrembsDoering2022a,Frembs2022a}).} Since quantum states correspond with completely positive maps under Choi's theorem, they too are generally sensitive to the relative operator ordering in subalgebras \cite{FrembsDoering2022a}.\footnote{In fact, by Ref.~\cite{Frembs2022a} a quantum state is sensitive to the relative order if and only if it is entangled.}

\subsection{Variations on the Choi-Jamio\l kowski isomorphism}\label{sec: variations on the CJI}

With Thm.~\ref{thm: Choi revisited}, we define the following version of the Choi-Jamio\l kowski isomorphism:
\begin{align}\label{eq: CJI*}
    \begin{split}
    &(\rho^\CJ_\phi,a^*\otimes b)_{\LA\otimes\LB} := (\phi_{\LAS \ra \LB}(a),b)_\LB \\
    \Longleftrightarrow \hspace{1.035cm}
    &(\rho^\CJ_\phi,a^*\otimes b)_{\LA\otimes\LB}
    := (\phi^*_{\LA \ra \LB}(a^*),b)_\LB \\
    \Longleftrightarrow \hspace{1.035cm}
    &(\rho^\CJ_\phi,a\otimes b)_{\LA\otimes\LB}
    \hspace{0.18cm} := (\phi^*_{\LA \ra \LB}(a),b)_\LB
    \end{split}
\end{align}
%\begin{equation*}
    %(\rho^\CJ_\phi,a^*\otimes b)_{\LA\otimes\LB}
    %= (\phi_{\LAS \ra \LB}(a),b)_\LB
    %= (\phi^*_{\LA \ra \LB}(a),b)_\LB
%\end{equation*}
for all $a \in \LA$, $b \in \LB$, where (for added emphasis) we indicate the signature of a map $\phi: \LAS \ra \LB$ in subscript $\phi_{\LAS \ra \LB}$. Clearly, the isomorphism is basis-independent since Eq.~(\ref{eq: CJI*}) is expressed in terms of inner products (similar to $J$ in Eq.~(\ref{eq: CJI})), and Thm.~\ref{thm: dePillis for Jordan algebras} and Thm.~\ref{thm: Jamiolkowski for Jordan algebras} apply since $\CJ$ agrees with $J$ under restriction to Jordan algebras. Finally, by comparison with Thm.~\ref{thm: Choi revisited}, $\rho^\CJ_\phi \in \LA\otimes\LB \cong \mc{L}(\cH_A\otimes\cH_B)$ is positive if and only if $\phi: \LAS \ra \LB$ is completely positive, if and only if $\phi^*: \LA \ra \LB$ is completely positive by Lm.~\ref{lm: complete positivity under *}. Eq.~(\ref{eq: CJI*}) therefore fixes a choice of operator ordering on the Jordan algebras $\JAsa$ and $\JBsa$; by analogy with the order-reversal imposed by transposition in Eq.~(\ref{eq: CI}), we denote the isomorphism in Eq.~(\ref{eq: CJI*}) by $\CJ$.\footnote{We remark that this version of the isomorphism is implicit in Ref.~\cite{FrembsDoering2022a,Frembs2022a}.}
%\mf{It is easy to see that $\rho_\phi^\CJ \in \LAop \otimes \LB$} is positive if and only if $\phi$ is complete positivity by Thm.~\ref{thm: Choi} (choose $\{e_k\}_k = \{|i\rangle\langle j|\}_{ij} \in \mathrm{ONB}(\LA)$). Moreover, $\rho_\phi^\CJ$ becomes basis-independent by a similar argument as in Eq.~(\ref{eq: basis-independence JI}). The basis-dependence of Eq.~(\ref{eq: CI}) thus results from the unfortunate identification of $\rho_\phi^C$ as an operator in $\LA \otimes \LB \cong \LAxB$ rather than an operator in $\LAop \otimes \LB$ %\cong \mc{L}(\cH^*_A \otimes \cH_B)}$ as by Eq.~(\ref{eq: CJ vs JC}).

This is not the only possible choice of operator orderings on $\JAsa$ and $\JBsa$. In particular, note that---different from Eq.~(\ref{eq: CJI*})---$\phi$ in Eq.~(\ref{eq: CJI}) is implicitly interpreted as a map $\phi: \LA \ra \LB$. The latter signature arises from Eq.~(\ref{eq: CJI*}) under a change of \mf{(time orientation in)} $C^*$-algebras $\LA \leftrightarrow \LAS \mf{\cong \LA^*}$: for all $a \in \LA$, $b \in \LB$,
\begin{align}\label{eq: CJ vs JC}
    \begin{split}
    &(\rho^\CJ_\phi,a\otimes b)_{\LA\otimes\LB}
    \hspace{0.575cm} = (\phi^*_{\LA \ra \LB}(a),b)_\LB \\
    \stackrel{\scalebox{0.6}{$\LA \leftrightarrow \LAS$}}{\longleftrightarrow} \quad
    &(\rho^\JC_\phi,a^*\otimes b)_{\LAS\otimes\LB} \hspace{0.03cm}
    := (\phi^*_{\LAS \ra \LB}(a^*),b)_\LB \\
    \Longleftrightarrow \hspace{1.035cm}
    &(\rho^\JC_\phi,a^*\otimes b)_{\LAS\otimes\LB} \hspace{0.03cm}
    := (\phi_{\LA \ra \LB}(a),b)_\LB\; .
    \end{split}
\end{align}
Observe that $\JC$ arises from $J$ under a change of identification from $\rho_\phi^J \in \mc{L}(\cH_A\otimes\cH_B)$ to $\rho^\JC_\phi \in \mc{L}(\cH^*_A\otimes\cH_B)$. \mf{Indeed, using Eq.~(\ref{eq: inner product in LH*}) between the first and the second line yields $\rho_\phi^\JC = (\rho^\CJ_\phi)^{*_A}$, where $*_A$ denotes the Hermitian adjoint on $\LA$.}
%NB: cf. Eq.~(\ref{eq: C* to J*})
We thus denote the isomorphism in Eq.~(\ref{eq: CJ vs JC}) %of relative operator orderings
by $\JC$. Similar to $\CJ$, we find that $\JC$ is basis-independent and Thm.~\ref{thm: dePillis for Jordan algebras} and Thm.~\ref{thm: Jamiolkowski for Jordan algebras} apply. Moreover, \mf{by comparison with Thm.~\ref{thm: Choi revisited}, $\phi: \LA \ra \LB$ is completely positive if and only if $\rho^\JC_\phi \in \LAS\otimes\LB \cong \mc{L}(\cH^*_A\otimes\cH_B)$ is positive.} In this way, Eq.~(\ref{eq: CJ vs JC}) recovers Eq.~(\ref{eq: CJI* pedestrian}); in particular, the identification of the dual Hilbert space in Eq.~(\ref{eq: CJI* pedestrian}) merely reflects the fact that $\LAS$ naturally acts on $\cH^*_A$
%(viewed as the subspace of diagonal operators in $\LAS$)
rather than on $\cH_A$.
%(viewed as the subspace of diagonal operators in $\LA$)
%NB: Eq.~(\ref{eq: JC}) also corresponds with the identification of positive operators on spacetime boundaries in \cite{Oeckl2003b}.}

Finally, %whenever $\phi$ is Hermiticity-preserving (\mf{$^*\phi = \phi^*$}), it holds
note that $\phi: \LA \ra \LB$ is completely positive if and only if $\phi: \LAS \ra \LBS$ is completely positive; similarly, $\phi: \LAS \ra \LB$ is completely positive if and only if $\phi: \LA \ra \LBS$ is completely positive.
%NB: This reflects the fact that the Born rule on the respective composite systems coincide.
It follows that %Lm.~\ref{lm: complete positivity under *}, Thm.~\ref{thm: Choi revisited} and
Eq.~(\ref{eq: CJI*}) and Eq.~(\ref{eq: CJ vs JC}) are invariant under an overall change: $\LA \leftrightarrow \LAS$ and $\LB \leftrightarrow \LBS$, which corresponds to the symmetry of those equations under complex conjugation (cf. Eq.~(\ref{eq: inner product in LH*})).
%\footnote{In other words, there is an overall freedom in labelling associative algebras over $\JH$ by $\LH$ or $\LHS$.}
In other words, $\rho^\CJ_\phi = (\rho^\CJ_\phi)^*$ is a state on both $\LA \otimes \LB$ and $\LAS \otimes \LBS$; similarly, $\rho^\JC_\phi = (\rho^\JC_\phi)^*$ is a state on both $\LAS \otimes \LB$ and $\LA \otimes \LBS$. Consequently, positivity of $\rho_\phi^\CJ$ (similarly $\rho_\phi^\JC$) encodes a \emph{relative choice of operator ordering between $\JAsa$ and $\JBsa$}. In summary, we record our \emph{variations on the Choi-Jamio\l kowski isomorphism}: for all $a \in \LA$, $b \in \LB$, \vspace{0.2cm}
\begin{align}\label{eq: variations on the CJI*}
    \begin{split}
    %(\rho^\CJ_\phi,a\otimes b^*)_{\LA\otimes\LB} =((\rho^\CJ_\phi)^*,a\otimes b^*)_{\LA\otimes\LB}=\overline{(\rho^\CJ_\phi,a^*\otimes b)_{\LA\otimes\LB}}=
    &(\rho^\CJ_\phi,a\otimes b)_{\LAS\otimes\LBS} \hspace{0.075cm} = (\phi^*_{\LAS \ra \LBS}(a),b)_{\LBS} \\ 
    %= \overline{(\phi_{\LA \ra \LBS}(a),b)_\LB} = (^*\phi_{\LA \ra \LB}(a),b^*)_\LB = (\phi_{\LAS \ra \LB}(a^*),b^*)_\LB
    \stackrel{\scalebox{0.6}{$\LA \leftrightarrow \LAS$}}{\stackrel{\scalebox{0.6}{$\LB \leftrightarrow \LBS$}}{\longleftrightarrow}} \quad &(\rho^\CJ_\phi,a\otimes b)_{\LA\otimes\LB} \hspace{0.675cm} = (\phi^*_{\LA \ra \LB}(a),b)_\LB \\
    \stackrel{\scalebox{0.6}{\color{white}{$\LA \leftrightarrow \LAS$}}}{\stackrel{\scalebox{0.6}{$\LA \leftrightarrow \LAS$}}{\longleftrightarrow}} \quad &(\rho^\JC_\phi,a^*\otimes b)_{\LAS\otimes\LB} \hspace{0.23cm} = (\phi_{\LA \ra \LB}(a),b)_\LB \\
    \stackrel{\scalebox{0.6}{$\LA \leftrightarrow \LAS$}}{\stackrel{\scalebox{0.6}{$\LB \leftrightarrow \LBS$}}{\longleftrightarrow}} \quad &
    %(\rho^\JC_\phi,a\otimes b^*)_{\LAS\otimes\LB}} = ((\rho^\JC_\phi)^*,a\otimes b^*)_{\LAS\otimes\LB}} = \overline{(\rho^\JC_\phi,a^*\otimes b)_{\LAS\otimes\LB}} =
    (\rho^\JC_\phi,a^*\otimes b)_{\LA\otimes\LBS} \hspace{0.23cm} = (\phi_{\LAS \ra \LBS}(a),b)_{\LBS}\; .
    %= \overline{(\phi_{\LAS \ra \LBS}(a),b)_\LB} = (^*\phi_{\LAS \ra \LB}(a),b^*)_\LB = (\phi_{\LA \ra \LB}(a^*),b^*)_\LB
    \end{split}
\end{align}

\section{Conclusion}

We have identified the physical significance of the basis-dependent definition of the Choi matrix ($\rho^C_\phi$ in Eq.~(\ref{eq: CI})) in Choi's theorem (Thm.~\ref{thm: Choi}) with the relative operator ordering between the $C^*$-algebras describing the two subsystems (Thm.~\ref{thm: Choi revisited}). Moreover, different choices of relative operator orderings distinguish between different versions of the Choi-Jamio\l kowski ismorphism (Eq.~(\ref{eq: variations on the CJI*})). Our analysis is closely related with the intrinsic dynamics in subsystems, as described by their respective time orientations \cite{Doering2014,FrembsDoering2022a,FrembsDoering2022b,Frembs2022a}. %, the distinction between Jordan and $C^*$-algebras \cite{AlfsenShultz1998b},
The latter play a crucial role for the classification of bipartite quantum states from more general non-signalling correlations \cite{FrembsDoering2022a}, and further allow for a classification of bipartite entanglement \cite{Frembs2022a}. %[building on the Peres-Horodecki criterion \cite{Peres1996,Horodeckisz1996}].

Since the Choi-Jamio\l kowski isomorphism is a basic tool across all of quantum information theory, we expect the techniques used here will also shed light on various other results in the field. For instance, due to its importance in the framework of process matrices \cite{OreshkovCostaBrukner2012,CostaShrapnel2016} and quantum causal models \cite{LeiferSpekkens2013,CostaShrapnel2016,AllenEtAl2017,BarrettLorenzOreshkov2020}, we suspect that an analysis which parallels the one given here will prove useful in the further advancement of the subject.

\paragraph*{Acknowledgements.} MF thanks Francesco Buscemi and Andreas D\"oring for discussions. The authors acknowledge financial support through grant number FQXi-RFP-1807 from the Foundational Questions Institute and Fetzer Franklin Fund, a donor advised fund of Silicon Valley Community Foundation, and ARC Future Fellowship FT180100317.

\bibliographystyle{siam}
\bibliography{bibliography}

\end{document}